\newtheorem{theorem}{Theorem}
\def\be { \begin{eqnarray} }
\def\ee { \end{eqnarray} }
\begin{document}


\title{Degrees of Freedom of Multi-hop MIMO Broadcast Networks with Delayed CSIT}

\author{Zhao~Wang, 
        Ming~Xiao, 
        Chao~Wang, 
        and~Mikael~Skoglund 
        \thanks{The authors are with the Communication Theory Lab., School of Electrical 
         Engineering, Royal Institute of Technology (KTH), Stockholm,  
         Sweden (E-mail:\{zhaowang, mingx, chaowang, skoglund\}@kth.se).}
         }


\vspace{-6ex}

\maketitle

\begin{abstract}
We study the sum degrees of freedom (DoF) of a class of
multi-layer relay-aided MIMO broadcast networks with
\emph{delayed} channel state information at transmitters (CSIT).
In the assumed network a $K$-antenna source intends to communicate to $K$
single-antenna destinations, with the help of $N-2$ layers of $K$
full-duplex single-antenna relays. We consider two practical
delayed CSIT feedback scenarios. If the source can obtain the CSI
feedback signals from all layers, we prove the optimal sum DoF of
the network to be $\frac{K}{1+\frac{1}{2}+\ldots+\frac{1}{K}}$. If
the CSI feedback is only within each hop, we show that when
$K=2$ the optimal sum DoF is $\frac{4}{3}$, and when $K\geq 3$ the
sum DoF $\frac{3}{2}$ is achievable. Our results reveal that the
sum DoF performance in the considered class of $N$-layer MIMO
broadcast networks with delayed CSIT may depend not on $N$, the
number of layers in the network, but only on $K$, the number of
antennas/terminals in each layer.
\end{abstract}

\begin{IEEEkeywords}
  Degrees of freedom, multi-hop MIMO broadcast network,
  delayed CSIT, interference alignment.
\end{IEEEkeywords}

\section{Introduction}

With the increasing interest in deploying relays in 4th
generation mobile networks, multi-user multi-hop systems
have drawn substantial research attention. In
spite of the rapid advances in the understanding of
single-hop networks, our knowledge on how to deal with inter-user
interference and design efficient transmission schemes in
multi-hop systems is relatively limited. For instance, we consider a
wireless communication system in which a $K$-antenna source
intends to communicate to $K$ single-antenna destinations. If the
source's transmission can directly reach the destinations, this
system is a well-studied $K$-user MIMO broadcast channel. It is
already known that if perfect channel state information at
transmitter (CSIT) is available, the optimal sum degrees of
freedom (DoF) of the system is $K$, while without CSIT the result
is only one. Clearly, CSIT serves as a very important factor that
influences system capacity. In practice, channel estimation is in
general performed by receivers and CSIT is typically obtained via
feedback signals sent from them. However, attaining perfect
\emph{instantaneous} CSIT in realistic systems may be a challenging task
when feedback delay is not negligible compared with channel
coherence time. To gain understanding in such scenarios,
Maddah-Ali and Tse \cite{MaddahAli2010} proposed a \emph{delayed
CSIT} concept to model the extreme case where channel coherence
time is smaller than feedback delay so that CSIT would be
completely outdated. They showed that by interference alignment (IA) 
design even the outdated CSIT can be advantageous to offer DoF 
gain achieving the optimal sum DoF of
a $K$-user MIMO broadcast channel $\frac{K}{1+\frac{1}{2}+\ldots+\frac{1}{K}}$
\cite{MaddahAli2010}. Hence, from a DoF perspective, communication
in this single-hop network is relatively well understood.

Nevertheless, if the source and the destinations are not
physically connected so that the communication has to be assisted
by intermediate relays, how many DoF are available is not clear,
especially when potentially \emph{multiple layers} of relays are
required and only delayed CSIT can be available. To study the DoF
of a multi-hop network, a straightforward \emph{cascade approach}
sees the network as a concatenation of individual single-hop
sub-networks. The network DoF is limited by the minimum DoF of all
sub-networks. In this paper, we consider a class of relay-aided
MIMO broadcast networks with a $K$-antenna source, $K$ single-antenna
destinations, and $N-2$ relay layers, each containing $K$
single-antenna full-duplex relays. Following the cascade approach,
the first hop can be treated as a $K$-user MIMO broadcast channel.
Each of the remaining hops can be seen as a $K \times K$
single-antenna X channel \cite{Cadambe2009}. Hence, the achievable
sum DoF of the considered network is $\frac{4}{3} -
\frac{2}{3(3K-1)}$, i.e. that of a $K \times K$ X channel
\cite{Abdoli2011}.

However, separating the network into individual sub-networks may
not always be a good strategy. For instance, provided perfect
instantaneous CSIT, references \cite{Jeon2009,Gou2010,Chao2011}
showed that in certain systems designing transmission by treating
all hops as a whole entity can perform strictly better than
applying the cascade approach. In this paper, we will show that
with delayed CSIT this is also the case for the considered
$N$-layer relay-aided MIMO broadcast networks. Specifically, we
focus on two delayed CSIT scenarios. In a \emph{global-range}
feedback scenario, where the CSI of all layers
can be decoded by the source, we propose a joint transmission design to
prove the optimal network sum DoF to be
$\frac{K}{1+\frac{1}{2}+\ldots+\frac{1}{K}}$. In addition, in a
\emph{one-hop-range} feedback scenario, where the CSI feedback
signals sent from each layer can only be received by its adjacent
upper-layer, we show that when $K=2$ the optimal sum DoF $\frac{4}{3}$
is achievable, and when $K \geq 3$ a DoF $\frac{3}{2}$ is
achievable. These results depend not on $N$ but only on $K$,
and are clearly better than those attained by the cascade
approach.

\section{System Model}

\label{section:System_Model}

As shown in Fig.~\ref{Fig:N_K_Relay_BC_Def}, we consider a
multi-hop MIMO broadcast network in which a source node with $K$
transmit antennas intends to communicate to $K$ single-antenna
destinations. There is no physical link between them so that $N-2$
($N \geq 3$) layers of intermediate relay nodes, each with $K$
full-duplex single-antenna relays, are deployed to aid the
communication. The network contains a total of $N$ layers of
nodes. No connection exists between non-adjacent layers. We term
this network an $(N,K)$ relay-aided MIMO broadcast network
throughout the paper. $n_{k}$ is used to represent the node $k$
($k \in\{1,2,\ldots,K\}$) at layer-$n$ ($n \in \{2,3,
\ldots,N\}$).

\begin{figure}[t!]
\centerline{\epsfig{file=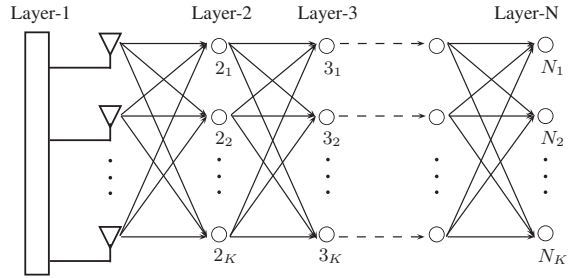,width=80mm}}
\caption{$(N,K)$ relay-aided MIMO broadcast networks.}
\label{Fig:N_K_Relay_BC_Def}
\end{figure}

Assume the rate tuple $(R_{1},R_{2},...,R_{K})$ between the source
and destinations can be achieved. Let $\mathcal{C}$ denote
the capacity region and $P$ denote the power constraint of each
layer. The sum DoF of the $(N,K)$ relay-aided MIMO broadcast
network with delayed CSIT is defined as \cite{MaddahAli2010}
\begin{equation} \label{Eqn:Eqn_DoF_Def}
D^{d-CSI}(N,K) = \max_{(R_{1},...,R_{K}) \in
\mathcal{C}}\left\{\lim_{P\rightarrow\infty}\frac{\Sigma_{i=1}^{K}R_{i}(P)}{\textrm{log}{P}}\right\}.
\end{equation}

Let a $K \times K$ matrix $\mathbf{H}^{[n-1]}(t)$ denote the
channel matrix between the $(n-1)$th and the $n$th layers (i.e.
the $(n-1)$th hop) at time slot $t$. The $i$th row
and $k$th column element of $\mathbf{H}^{[n-1]}(t)$, $h_{ik}^{[n-1]}(t)$,
represents the channel gain
from node $(n\!-\!1)_{k}$ to node $n_{i}$. We consider block
fading channels. All fading coefficients remain constant within
one time slot, but change independently across different time
slots. Let $x_{k}^{[n-1]}(t)$ ($E[|x_k^{[n-1]}(t)|^{2}] \leq \frac{P}{K}$)
and $y_{k}^{[n]}(t)$ represent the transmit signal of node
$(n\!-\!1)_{k}$ and the received signal of node $n_{k}$ at time
slot $t$, respectively. The received signals of layer-$n$ is \be
\label{Eqn:InOut_Relation_Nlayer}
   \mathbf{y}^{[n]}(t) = \mathbf{H}^{[n-1]}(t)\mathbf{x}^{[n-1]}(t) + \mathbf{z}^{[n]}(t),  n=2,3,...,N,
\ee where $\mathbf{x}^{[n-1]}(t) = [x_{1}^{[n-1]}(t)~
x_{2}^{[n-1]}(t)~ \ldots~ x_{K}^{[n-1]}(t)]^{T}$ is the transmit
signals of layer-$(n\!-\!1)$, $\mathbf{y}^{[n]}(t) =
[y_{1}^{[n]}(t)~ y_{2}^{[n]}(t)~ \ldots~ y_{K}^{[n]}(t)]^{T}$, and
$\mathbf{z}^{n}(t)$ is the unit-power complex additive white
Gaussian noise (AWGN).

At each time slot $t$, each receiver is able to obtain the CSI of
its incoming channels by a proper training process. That is, $n_i$
knows $h_{ik}^{[n-1]}(t)$, $\forall k \in \{1,2, \ldots, K\}$.
Such knowledge can be directly delivered to nodes in later layers
along with data transmission. To transmit CSI to
previous layers, feedback signals are used from each receiver. We
assume that the feedback delay is larger than the channel coherence
time. Thus if any transmitter can receive and decode the feedback
signals, its obtained CSIT is in fact delayed by one time slot. In
this paper, we consider two scenarios of delayed CSIT feedback in
the $(N,K)$ relay-aided MIMO broadcast network:

\emph{1) Global-range delayed CSIT:} In this scenario, the source
node can receive and successfully decode the feedback signals
transmitted by all nodes. Hence it can obtain the global CSI
$\mathbf{H}^{[1]}(t), \mathbf{H}^{[2]}(t), ...,
\mathbf{H}^{[N-1]}(t)$ at time slot $t+1$.

\emph{2) One-hop-range delayed CSIT:}  In this case, the feedback
signals can be delivered only between
adjacent layers. Then at time slot $t+1$, $\mathbf{H}^{[n-1]}(t)$
is known at only layer-$(n-1)$.

\section{Main Results and Discussions} \label{section:Main_Results}


We study the sum DoF of the considered $(N,K)$
relay-aided MIMO broadcast network, for both global-range and
one-hop-range delayed CSIT scenarios. Our main results are
summarized in the following two theorems.

\begin{theorem}\label{Theorem:DoF_Global_Range_Delayed_CSIT}
With \emph{global-range} delayed CSIT, the sum DoF of the $(N,K)$
relay-aided MIMO broadcast network is
\begin{equation}\label{eq:DoF_Global_Range_Delayed_CSIT}
D^{d-CSI}(N,K) = \frac{K}{1+\frac{1}{2}+\ldots+\frac{1}{K}}.
\end{equation}
\end{theorem}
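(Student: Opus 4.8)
The plan is to establish the result by proving a matching converse (upper bound) and achievability (lower bound), since Theorem~\ref{Theorem:DoF_Global_Range_Delayed_CSIT} claims $D^{d-CSI}(N,K)=\frac{K}{1+\frac{1}{2}+\cdots+\frac{1}{K}}$ exactly. For the converse, I would argue that the $(N,K)$ relay-aided broadcast network cannot outperform the single-hop $K$-user MIMO broadcast channel with delayed CSIT. The key observation is that the source's transmitted signal $\mathbf{x}^{[1]}(t)$ lives in a $K$-dimensional space, and all information destined for the $K$ destinations must pass through the first hop $\mathbf{H}^{[1]}(t)$. Since every relay and every destination in the network is just processing (a noisy function of) the first-layer output $\mathbf{y}^{[2]}$, a genie that hands all relay observations and all channel realizations to a centralized decoder can only help; what remains is essentially a $K$-user MIMO broadcast channel from the source to the $K$ destinations with delayed CSIT, whose sum DoF is $\frac{K}{1+\frac{1}{2}+\cdots+\frac{1}{K}}$ by \cite{MaddahAli2010}. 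I would make this precise by a statistical-equivalence / data-processing argument: fix the destinations' channel outputs to be generated from $\mathbf{x}^{[1]}$ through an equivalent degraded cascade, then invoke the Maddah-Ali--Tse outer bound, taking care that delayed CSIT in the multi-hop model is \emph{no better} than delayed CSIT in the single-hop model (the relays only have delayed, not instantaneous, CSI of downstream hops, so they cannot create instantaneous-CSIT gains).

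For achievability, the plan is to exhibit a joint (non-cascade) transmission scheme that attains $\frac{K}{1+\frac{1}{2}+\cdots+\frac{1}{K}}$ over the whole network. The idea is to emulate the Maddah-Ali--Tse phased scheme end-to-end: in phase~$j$ (for $j=1,\dots,K$), the source sends order-$j$ symbols (linear combinations intended for every size-$j$ subset of destinations), and because the source has global-range delayed CSIT it knows, at time $t+1$, all channel matrices $\mathbf{H}^{[1]}(t),\dots,\mathbf{H}^{[N-1]}(t)$ at time $t$. The crucial structural fact to establish is that the relays, operating with only one-time-slot processing delay and forwarding (amplify-and-forward or a fixed linear combination of) what they receive, let the source treat the \emph{composite} channel from layer-$1$ to layer-$N$ as a single effective MIMO broadcast channel whose matrix at time $t$ is a known product/combination of the per-hop matrices. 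Concretely, if each relay simply repeats its received symbol, the cascade of linear hops realizes an effective $K\times K$ matrix, and since the source learns all the constituent matrices with one slot delay, it can reconstruct the side information (the overheard interference combinations) that each destination would need — exactly as in the single-hop scheme — and re-transmit these as higher-order symbols in later phases. The DoF bookkeeping (phase~$j$ lasts $\binom{K}{j}/\binom{K-1}{j-1}$ times as... — i.e., the standard Maddah-Ali--Tse accounting) then yields the claimed sum DoF.

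The main obstacle I expect is the \emph{timing/causality} issue introduced by the $N$ layers: a symbol sent by the source at time $t$ is not seen by layer $N$ until time $t+N-1$, and the feedback of $\mathbf{H}^{[n-1]}(t)$ reaches the source at $t+1$; I must carefully interleave the phases and the relay forwarding schedule so that (i) the source always has the delayed CSI it needs before it must transmit the corresponding order-$(j{+}1)$ symbol, and (ii) the relays are never required to forward something they have not yet received. Handling full-duplex relays that must simultaneously receive a fresh symbol and transmit an older one is what makes the latency constant ($N-2$ extra slots) rather than multiplicative, so the DoF is unaffected as $P\to\infty$ (the finite prelog offset vanishes). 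A secondary technical point is ensuring the effective end-to-end channel stays generic enough (full rank almost surely) so that all the linear-independence conditions underlying the IA argument hold; this follows from independence of the per-hop fading and the genericity of products of independent continuous random matrices. Once the scheme is laid out with an explicit relay forwarding rule and phase schedule, the achievability computation is the routine Maddah-Ali--Tse counting, and together with the converse it pins $D^{d-CSI}(N,K)$ to $\frac{K}{1+\frac{1}{2}+\cdots+\frac{1}{K}}$.
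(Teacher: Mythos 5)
Your achievability plan is essentially the paper's own: full-duplex amplify-and-forward relays collapse the cascade into an equivalent single-hop channel $\tilde{\mathbf{H}}(t)=\prod_{n=3}^{N}\bigl(\mathbf{H}^{[n-1]}(t)\mathbf{G}^{[n-1]}(t)\bigr)\mathbf{H}^{[1]}(t)$, which the source knows with one slot of delay thanks to global-range feedback, so the Maddah-Ali--Tse scheme runs unchanged on the equivalent $K$-user MIMO broadcast channel; the propagation latency you worry about is an additive constant and does not affect the DoF. That half is sound.

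The gap is in your converse. You cut at the \emph{first} hop and give ``all relay observations and all channel realizations to a centralized decoder.'' A centralized decoder turns the problem into a point-to-point $K\times K$ MIMO link, whose DoF is $K$, so this genie is too generous to yield $\frac{K}{1+\frac{1}{2}+\cdots+\frac{1}{K}}$; and the alternative reading --- that the end-to-end map from $\mathbf{x}^{[1]}$ to the distributed destinations ``is essentially a $K$-user MIMO BC with delayed CSIT,'' so the Maddah-Ali--Tse outer bound can simply be invoked --- is asserted rather than proved. The relays' forwarding maps are part of the transmission scheme (not a fixed memoryless channel), and each relay has \emph{instantaneous} CSI of its incoming channel, so the hypotheses under which the outer bound of \cite{MaddahAli2010} is derived do not hold off the shelf for the composite channel; likewise, bounding by the first hop viewed as a broadcast channel to the relays is invalid because the relays are not required to decode any individual message. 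The paper's converse avoids all of this by cutting at the \emph{last} hop: let the $K$ relays in each layer fully cooperate (this cannot reduce performance) and hand the cooperating last relay layer all the messages; what remains is exactly a single-hop $K$-user MIMO broadcast channel whose transmitter has only delayed CSIT of its outgoing channel (learned through feedback), so the bound $\frac{K}{1+\frac{1}{2}+\cdots+\frac{1}{K}}$ of \cite{MaddahAli2010} applies directly. You need to replace your converse with an argument of this type (or carry out a genuine end-to-end delayed-CSIT converse, which is substantially more work than ``invoking'' the single-hop result).
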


\begin{proof}
Please see Section \ref{section:Proof_Theorem1} for the proof.
\end{proof}

\begin{theorem}\label{Theorem:DoF_One_Hop_Range_Delayed_CSIT}
With \emph{one-hop-range} delayed CSIT, the sum DoF of the $(N,K)$
relay-aided MIMO broadcast network is
\begin{eqnarray}\label{eq:DoF_One_Hop_Range_Delayed_CSIT}
&\!\!\!\!D^{d-CSI}(N,2)\!\!\!\!& = \frac{4}{3}, \nonumber \\
\frac{3}{2} \leq &\!\!\!\!D^{d-CSI}(N,K)\!\!\!\!& \leq
\frac{K}{1+\frac{1}{2}+\ldots+\frac{1}{K}},~ K\geq3.
\end{eqnarray}
\end{theorem}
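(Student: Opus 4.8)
The plan is to split the statement into an (easy) converse and a (constructive) achievability part. For the converse, note that one-hop-range delayed CSIT conveys strictly less information than global-range delayed CSIT, so any coding/relaying scheme that is feasible in the one-hop-range model is also feasible in the global-range model; hence the sum DoF under one-hop-range feedback cannot exceed its global-range value, and Theorem~\ref{Theorem:DoF_Global_Range_Delayed_CSIT} immediately gives the upper bound $\frac{K}{1+\frac12+\dots+\frac1K}$ in \eqref{eq:DoF_One_Hop_Range_Delayed_CSIT}, which for $K=2$ is exactly $\frac43$. So the real content is achievability: a one-hop-range scheme attaining $\frac43$ for $K=2$ (thereby matching the converse) and one attaining $\frac32$ for $K\ge 3$.

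For the $K=2$ achievability I would \emph{abandon the cascade approach}. The bottleneck in the cascade is the chain of $2\times 2$ X-channel hops, which with delayed CSIT give only $\frac43-\frac{2}{3(3K-1)}=\frac65$ for $K=2$ \cite{Abdoli2011}, whereas the first-hop two-antenna broadcast stage is never the bottleneck (it supports DoF up to $2$). Instead I would design jointly over the whole chain: the source feeds each layer-$2$ relay one linear combination of each destination's symbol block (keeping the two message streams \emph{mixed}, so that every downstream hop still carries traffic on which a Maddah-Ali--Tse (MAT)-style retransmission is possible \cite{MaddahAli2010}), and this structured traffic is then propagated hop by hop, the retransmission at hop $n-1$ using only the outgoing channel $\mathbf{H}^{[n-1]}$ that layer-$(n-1)$ knows with one-slot delay. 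The full-duplex relays are crucial twice over: a relay forwards block $m$ while receiving block $m+1$, and destinations/relays overhear the "interference" equations in the early phase of a block, which are then recycled to hand each destination its second independent equation, exactly as in MAT. A steady-state (pipelining) analysis over blocks of fixed length shows the per-block rate equals that of a single MAT stage, i.e. $\frac43$, independent of $N$.

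For $K\ge 3$ the cascade gives $\frac43-\frac{2}{3(3K-1)}<\frac43$, and I would give a shorter, more symmetric joint scheme reaching $\frac32$. The point of requiring $K\ge 3$ is that three or more relays per layer provide the spatial degrees of freedom a two-relay layer lacks: within a layer the relays can synthesize — distributedly, with the aid of the one-hop feedback — the interference-carrying combinations needed so that each served destination receives two equations in its single desired symbol, with the residual interference cancelled by a retransmitted overheard equation. Again the full-duplex relay layers let the scheme be pipelined, decoupling the achieved DoF from $N$ and yielding $\frac32$; this is stated only as a lower bound because the matching converse available to us is the (strictly larger, for $K\ge 3$) value $\frac{K}{1+\frac12+\dots+\frac1K}$ of Theorem~\ref{Theorem:DoF_Global_Range_Delayed_CSIT}.

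The hard part will be the per-hop forwarding step under the one-hop-range constraint. Classical MAT-type retransmission assumes a co-located transmitter that knows several users' signals and can broadcast an arbitrary linear combination of them; here each relay layer is a set of non-cooperating single-antenna nodes, each knowing only its own received sample, and possessing only its outgoing channel delayed by one slot. I will need to show that the "order-two" retransmitted symbols the scheme demands can be synthesized distributedly across a relay layer (or that a bounded amount of extra signalling to realign the relays' knowledge still nets the claimed DoF), and that chaining these hops — each feeding the next — accumulates no DoF loss as $N$ grows. Carrying out the pipelining bookkeeping carefully, so that the fixed-length blocks of consecutive hop-stages interlock with no idle slots, is where most of the technical care goes.
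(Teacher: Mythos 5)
Your converse is fine and matches the paper's: one-hop-range feedback is a degraded version of global-range feedback, so the outer bound $\frac{K}{1+\frac12+\cdots+\frac1K}$ of Theorem~\ref{Theorem:DoF_Global_Range_Delayed_CSIT} carries over, and for $K=2$ it equals $\frac43$. The achievability plan also has the right skeleton (joint design over all hops, MAT-style order-two retransmissions, pipelining over full-duplex relays). But there is a genuine gap: the step you yourself flag as ``the hard part'' --- how a layer of non-cooperating single-antenna relays, each knowing only its own received sample and its outgoing channel with one-slot delay, can synthesize the order-two symbols the next layer needs --- is precisely the content of the paper's proof, and you leave it unresolved, offering only two unproven alternatives (``distributed synthesis'' or ``bounded extra signalling''). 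Without that mechanism neither the $\frac43$ claim for $K=2$ nor the $\frac32$ claim for $K\ge 3$ is established; in particular your $K\ge3$ description (``each served destination receives two equations in its single desired symbol'') does not correspond to a concrete scheme, and you give no slot count that actually produces $\frac32$.

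The paper closes this gap with two explicit constructions. For $K=2$, the data is routed so that after each three-slot block a relay node holds \emph{both} linear equations of one destination's stream (slot $1$ carries user~$1$'s pair, slot $2$ user~$2$'s pair, slot $3$ the cross exchange); hence each relay can \emph{locally} compute, from symbols it fully knows and its delayed outgoing channel $\mathbf{H}^{[n]}$, the equation overheard by the other next-layer node, and retransmit it --- no distributed synthesis from partial observations is ever needed, and the induction over hops gives $4$ messages per $3$ slots, i.e.\ $\frac43$, independent of $N$. For $K\ge3$ (illustrated on the $(3,3)$ network), the source --- which can reconstruct every relay's phase-one reception from its own messages and the delayed first-hop CSI --- spends three \emph{extra} slots sending pairwise order-two combinations to the relay layer, so that each relay ends up knowing a complete triple of equations; with the delayed second-hop channels each relay then reconstructs the destinations' missing equations and delivers them pairwise in three further slots, while the source's extra slots are overlapped with the relays' forwarding slots by pipelining. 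The bookkeeping ($9l$ messages in $6l+3(N-2)$ slots) is what yields exactly $\frac32$; the cost of the extra source slots is also why the one-hop-range bound is $\frac32$ rather than the broadcast value for $K\ge3$. Your proposal anticipates the need for ``extra signalling to realign the relays' knowledge'' but does not construct it or account for its cost, so the proof is incomplete at its decisive point.
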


\begin{proof}
Please see Section
\ref{section:Proof_Theorem2} for the proof.
\end{proof}



We can see that $N$ does not appear in
(\ref{eq:DoF_Global_Range_Delayed_CSIT}) or
(\ref{eq:DoF_One_Hop_Range_Delayed_CSIT}). Thus, the sum
DoF of the $(N,K)$ relay-aided broadcast network would not be
limited by the number of layers in the network, but may be related
only to $K$, the number of antennas/users.




With \emph{global-range} feedback, the sum DoF of the network
is the same as that in a single-hop $K$-user MIMO broadcast
channel.  The result reveals the importance of providing the
CSI of the whole network to the source. In practice, this
can be achieved by e.g., each node broadcasting its feedback
signal with a sufficiently high power.
However, this may not be possible in some systems, and
\emph{one-hop-range} feedback may be more feasible. In this
case, the CSI flow is limited within only one
hop, which in turn affects the interference management in the
network. When $K\!=\!2$, the sum DoF is shown
to be $\frac{4}{3}$, by a joint transmission design among all
hops. Following a similar strategy, the sum DoF can be
lower bounded by $\frac{3}{2}$ for $K \geq 3$. Although currently
it is difficult to quantify the distance between this lower
bound and the actual achievable sum DoF, we can see that
when $K$ is small, e.g., $K=3,4$, the lower bound is tight
since it is only slightly smaller than a sum DoF upper bound.

Recall that applying the cascade approach the achievable sum DoF
is limited by that of a $K \times K$ X channel, i.e., $\frac{4}{3}
- \frac{2}{3(3K-1)}$ \cite{Abdoli2011}. By a joint transmission
design among all hops, our scheme strictly surpasses the cascade
approach. The task of proving the optimality of our results or
finding even better schemes to attain the actual sum DoF of an
$(N,K)$ relay-aided MIMO broadcast network will be left for future
investigation.

\section{Proof of Theorem \ref{Theorem:DoF_Global_Range_Delayed_CSIT}}

\label{section:Proof_Theorem1}
%

\subsubsection{Outer Bound}
We assume that all the relays in each layer can fully
cooperate and jointly process their signals. Since this assumption
would not reduce network performance, the sum DoF of this new
system, which is clearly limited by that of the last hop (i.e. a
single-hop $K$-user MIMO broadcast channel), would serve as an
outer bound of the sum DoF of the considered $(N,K)$ relay-aided
broadcast network. According to \cite{MaddahAli2010}, the outer
bound is $\frac{K}{1+\frac{1}{2}+\ldots+\frac{1}{K}}$.


\subsubsection{Achievability} Consider full-duplex
amplify-and-forward relays. At time slot $t$, node $n_{i}$ chooses
$g^{[n]}_{i}(t)$ as its amplification coefficient such that
$|g^{[n]}_{i}(t)|^{2} \left( \sum_{k=1}^{K} |h^{[n-1]}_{ik}(t)|^{2} +
\frac{1}{K} \right) \leq 1$. We define $\mathbf{G}^{[n]}(t) =
\textrm{diag}\{g^{[n]}_{1}(t),g^{[n]}_{2}(t), \ldots, g^{[n]}_{K}(t)\}$
and focus on the high-SNR regime (where DoF is effective). Hence
we omit the noise term in (\ref{Eqn:InOut_Relation_Nlayer}).
At time slot $t$, the received signals at
the layer-$N$ (i.e. the destinations) can be denoted by
\begin{equation}\label{eq:Input_Output_Relation_Global_Range_CSIT}
\mathbf{y}^{[N]}(t) =  \left( \prod_{n=3}^{N}
    \mathbf{H}^{[n-1]}(t) \mathbf{G}^{[n-1]}(t) \right) \mathbf{H}^{[1]}(t)
    \mathbf{x}^{[1]}(t).
\end{equation}

Let $\tilde{\mathbf{H}}(t) = \prod_{n=3}^{N} \left(
\mathbf{H}^{[n-1]}(t) \mathbf{G}^{[n-1]}(t) \right)
\mathbf{H}^{[1]}(t)$ and substitute it into
(\ref{eq:Input_Output_Relation_Global_Range_CSIT}). We obtain an
equivalent single-hop $K$-user MIMO broadcast channel with
an equivalent channel matrix $\tilde{\mathbf{H}}(t)$. Because the
source has the delayed CSI of the whole network,
$\tilde{\mathbf{H}}(t-1)$ is known at time slot $t$. Thus the
transmission scheme proposed in \cite{MaddahAli2010} for achieving
the sum DoF of a single-hop $K$-user MIMO broadcast channel with
delayed CSIT can also be employed in the equivalent system.
The sum DoF outer bound
$\frac{K}{1+\frac{1}{2}+\ldots+\frac{1}{K}}$ is achievable.

\section{Proof of Theorem $2$}

\label{section:Proof_Theorem2}

Clearly, the outer bound above still holds for
\emph{one-hop-range} feedback. When $K=2$ it can be shown
that the outer bound $\frac{4}{3}$ is tight. However, 
it may not be true for $K \geq 3$. In this proof we will present a new
multi-round transmission scheme that treats all hops as a whole
entity aiming for aligning interference. The achievable sum DoF is 
higher than that obtained by the cascade approach and thus will serve 
as a lower bound to the sum DoF of the considered network.

Due to the page limitation, we will mainly focus on an example
$(3,3)$ relay-aided MIMO broadcast network. Let integer $l \geq
1$. We will show that $9l$ independent messages can be delivered
from the $3$-antenna source to the $3$ single-antenna destinations
through a layer of $3$ single-antenna full-duplex relays, using a
total of $6l+3$ time slots. Then when $l \rightarrow \infty$, the
sum DoF $\frac{3}{2}$ can be asymptotically achieved. The
corresponding approach for general networks will be given later.

\begin{table*}
\addtolength{\tabcolsep}{-2pt} \centering \caption{}
    \begin{tabular}{c | c c c | c c c | c c c | c c c}
        \hline 
        $t$             & $1$               & $2$               & $3$              & $4$                    & $5$                    & $6$
            & $7$                    & $8$                     & $9$                    & $10$                   & $11$                   & $12$                   \\ \hline
        $x_{1}^{[1]}(t)$   & $\mu_{1}(1)$      & $\mu_{2}(1)$      & $\mu_{3}(1)$     & $L_{2}^{[2]}(1)$       & $L_{3}^{[2]}(1)$       &
            & $\mu_{1}(2)$           & $\mu_{2}(2)$            & $\mu_{3}(2)$           & $L_{2}^{[2]}(2)$       & $L_{3}^{[2]}(2)$       &                        \\
        $x_{2}^{[1]}(t)$   & $\nu_{1}(1)$      & $\nu_{2}(1)$      & $\nu_{3}(1)$     & $L_{4}^{[2]}(1)$       &                        & $L_{6}^{[2]}(1)$
            & $\nu_{1}(2)$           & $\nu_{2}(2)$            & $\nu_{3}(2)$           & $L_{4}^{[2]}(2)$       &                        & $L_{6}^{[2]}(2)$       \\
        $x_{3}^{[1]}(t)$   & $\omega_{1}(1)$   & $\omega_{2}(1)$   & $\omega_{3}(1)$  &                        & $L_{7}^{[2]}(1)$       & $L_{8}^{[2]}(1)$
            & $\omega_{1}(2)$        & $\omega_{2}(2)$         & $\omega_{3}(2)$        &                        & $L_{7}^{[2]}(2)$       & $L_{8}^{[2]}(2)$       \\ \hline
        $y_{1}^{[2]}(t)$   & $L_{1}^{[2]}(1)$  & $L_{4}^{[2]}(1)$  & $L_{7}^{[2]}(1)$ & $\gamma_{12}^{[2]}(1)$ & $\gamma_{13}^{[2]}(1)$ &
            & $L_{1}^{[2]}(2)$       & $L_{4}^{[2]}(2)$        & $L_{7}^{[2]}(2)$       & $\gamma_{12}^{[2]}(2)$ & $\gamma_{13}^{[2]}(2)$ &                      \\
        $y_{2}^{[2]}(t)$   & $L_{2}^{[2]}(1)$  & $L_{5}^{[2]}(1)$  & $L_{8}^{[2]}(1)$ & $\gamma_{12}^{[2]}(1)$ &                        & $\gamma_{23}^{[2]}(1)$
            & $L_{2}^{[2]}(2)$       & $L_{5}^{[2]}(2)$        & $L_{8}^{[2]}(2)$       & $\gamma_{12}^{[2]}(2)$ &                        & $\gamma_{23}^{[2]}(2)$ \\
        $y_{3}^{[2]}(t)$   & $L_{3}^{[2]}(1)$  & $L_{6}^{[2]}(1)$  & $L_{9}^{[2]}(1)$ &                        & $\gamma_{13}^{[2]}(1)$ & $\gamma_{23}^{[2]}(1)$
            & $L_{3}^{[2]}(2)$       & $L_{6}^{[2]}(2)$        & $L_{9}^{[2]}(2)$       &                        & $\gamma_{13}^{[2]}(2)$ & $\gamma_{23}^{[2]}(2)$ \\ \hline
        $x_{1}^{[2]}(t)$   &                   &                   &                  & $L_{1}^{[2]}(1)$       & $L_{4}^{[2]}(1)$       & $L_{7}^{[2]}(1)$
            & $L_{2}^{[3]}(1)$       & $L_{3}^{[3]}(1)$        &                        & $L_{1}^{[2]}(2)$       & $L_{4}^{[2]}(2)$       & $L_{7}^{[2]}(2)$       \\
        $x_{2}^{[2]}(t)$   &                   &                   &                  & $L_{2}^{[2]}(1)$       & $L_{5}^{[2]}(1)$       & $L_{8}^{[2]}(1)$
            & $L_{4}^{[3]}(1)$       &                         & $L_{6}^{[3]}(1)$       & $L_{2}^{[2]}(2)$       & $L_{5}^{[2]}(2)$       & $L_{8}^{[2]}(2)$       \\
        $x_{3}^{[2]}(t)$   &                   &                   &                  & $L_{3}^{[2]}(1)$       & $L_{6}^{[2]}(1)$       & $L_{9}^{[2]}(1)$
            &                        & $L_{7}^{[3]}(1)$        & $L_{8}^{[3]}(1)$       & $L_{3}^{[2]}(2)$       & $L_{6}^{[2]}(2)$       & $L_{9}^{[2]}(2)$       \\ \hline
        $y_{1}^{[3]}(t)$   &                   &                   &                  & $L_{1}^{[3]}(1)$       & $L_{4}^{[3]}(1)$       & $L_{7}^{[3]}(1)$
            & $\gamma_{12}^{[3]}(1)$ & $\gamma_{13}^{[3]}(1)$  &                        & $L_{1}^{[3]}(2)$       & $L_{4}^{[3]}(2)$       & $L_{7}^{[3]}(2)$       \\
        $y_{2}^{[3]}(t)$   &                   &                   &                  & $L_{2}^{[3]}(1)$       & $L_{5}^{[3]}(1)$       & $L_{8}^{[3]}(1)$
            & $\gamma_{12}^{[3]}(1)$ &                         & $\gamma_{23}^{[3]}(1)$ & $L_{2}^{[3]}(2)$       & $L_{5}^{[3]}(2)$       & $L_{8}^{[3]}(2)$       \\
        $y_{3}^{[3]}(t)$   &                   &                   &                  & $L_{3}^{[3]}(1)$       & $L_{6}^{[3]}(1)$       & $L_{9}^{[3]}(1)$
            &                        & $\gamma_{13}^{[3]}(1)$  & $\gamma_{23}^{[3]}(1)$ & $L_{3}^{[3]}(2)$       & $L_{6}^{[3]}(2)$       & $L_{9}^{[3]}(2)$       \\ \hline
    \end{tabular}
\label{Table:BlockCoding}
\end{table*}

Recall that we use $y_k^{[n]}(t)$ and $x_k^{[n]}(t)$ respectively to
denote the received and transmitted signals of the
$k$th node in layer-$n$ (or the $k$th antenna if $n=1$) at time
slot $t$. The transmission process in the $(3,3)$ relay-aided MIMO
broadcast network, for the first $12$ time slots, is shown in
Table \ref{Table:BlockCoding}. Specifically, $2$ rounds of
messages, each containing $9$ independent messages, are delivered
to the destinations. Let $\mu_k(l)$, $\nu_k(l)$, and $\omega_k(l)$
($k\in \{1,2,3\}$) denote the source messages intended for the
destinations $3_k$ (the index $l$ means that the notations apply
for the $l$th transmission round). In what follows, we will
explain the first round of transmission. It consists of two
\emph{phases}.

\textbf{Phase One:} The first phase takes the first $3$ time
slots. At time slot $t$ ($t \in \{1,2,3\}$), $\mu_{t}(1),
\nu_{t}(1), \omega_{t}(1)$ are transmitted by the three source
antennas respectively. Hence each relay (i.e. each node of
layer-$2$) receives a linear combination of three messages at each
time slot. Again, we ignore the noise in
(\ref{Eqn:InOut_Relation_Nlayer}). The received signals at $2_k$
is expressed as ($t \in \{1,2,3\}$)
\begin{eqnarray}
y_k^{[2]}(t) = h_{k1}^{[1]}(t)\mu_{t}(1) + h_{k2}^{[1]}(t)\nu_{t}(1)
    + h_{k3}^{[1]}(t)\omega_{t}(1).
\end{eqnarray}
Let $L_{3(t-1)+k}^{[2]}(1)=y_k^{[2]}(t)$ denote the linear
equation known by $2_k$ at time slot $t$.
After the $3$rd time slot, since $\mathbf{H}^{[1]}(1)$,
$\mathbf{H}^{[1]}(2)$, and $\mathbf{H}^{[1]}(3)$ are known at the
source, all the equations $L_{i}^{[2]}(1)$, $\forall
i=1,2,\cdots,9$, can be recovered by the source.

\textbf{Phase Two:} This phase takes the next $6$ time slots after
phase one. At each time slot $t$ ($t \in \{3,4,5\}$) only two
source antennas are activated to retransmit the equations
$L_{i}^{[2]}(1)$. According to $x_{k}^{[1]}(t)$
shown in Table \ref{Table:BlockCoding}, we have
\begin{align}
    \label{Eqn:Eqn_OrdertwoSymbol}
        y_{k}^{[2]}(4) = h_{k1}^{[1]}(4) L_{2}^{[2]}(1) + h_{k2}^{[1]}(4) L_{4}^{[2]}(1); \\
        y_{k}^{[2]}(5) = h_{k1}^{[1]}(4) L_{3}^{[2]}(1) + h_{k3}^{[1]}(4) L_{7}^{[2]}(1); \\
        y_{k}^{[2]}(6) = h_{k2}^{[1]}(4) L_{6}^{[2]}(1) + h_{k3}^{[1]}(4) L_{8}^{[2]}(1).
\end{align}
Since node $2_1$ obtains $L_{4}^{[2]}(1)$ in phase one, at time
slot $4$ it can recover $L_{2}^{[2]}(1)$. Similarly, both
$L_{2}^{[2]}(1)$ and $L_{4}^{[2]}(1)$ are also known at node
$2_2$. Use $\gamma_{ij}^{[n]}(l)=(a,b)$ to represent that
equations $a$ and $b$ are recovered by both nodes $n_i$ and $n_j$.
As shown in Table \ref{Table:BlockCoding}, we can replace both
$y_{1}^{[2]}(4)$ and $y_{2}^{[2]}(4)$ with $\gamma_{12}^{[2]}(1) =
(L_{2}^{[2]}(1),L_{4}^{[2]}(1))$. Clearly, we also have
$\gamma_{13}^{[2]}(1) = (L_{3}^{[2]}(1), L_{7}^{[2]}(1))$ and
$\gamma_{23}^{[2]}(1) = (L_{6}^{[2]}(1),L_{8}^{[2]}(1))$.

Meanwhile, the relay nodes also send the equations they received
in phase one to the destinations, as shown in Table
\ref{Table:BlockCoding}. The received equations at the
destinations $3_{k}$ are:
\begin{eqnarray}
 \label{Eqn:Eqn_NlayerRepre2_1}
 \!\!\!\!\!y_{k}^{[3]}(4) \!\!\!\!\!&=&\!\!\!\!\! h_{k1}^{[2]}(4)L_{1}^{[2]}(1) \!+\!
            h_{k2}^{[2]}(4)L_{2}^{[2]}(1) \!+\! h_{k3}^{[2]}(4)L_{3}^{[2]}(1), \\
 \label{Eqn:Eqn_NlayerRepre2_2}
 \!\!\!\!\!y_{k}^{[3]}(5) \!\!\!\!\!&=&\!\!\!\!\! h_{k1}^{[2]}(5)L_{4}^{[2]}(1) \!+\!
            h_{k2}^{[2]}(5)L_{5}^{[2]}(1) \!+\! h_{k3}^{[2]}(5)L_{6}^{[2]}(1), \\
    \label{Eqn:Eqn_NlayerRepre2_3}
 \!\!\!\!\!y_{k}^{[3]}(6) \!\!\!\!\!&=&\!\!\!\!\! h_{k1}^{[2]}(6)L_{7}^{[2]}(1) \!+\!
            h_{k2}^{[2]}(6)L_{8}^{[2]}(1) \!+\! h_{k3}^{[2]}(6)L_{9}^{[2]}(1).
\end{eqnarray}


Let $L_{3(t-4)+k}^{[3]}(1)=y_k^{[3]}(t)$. Clearly, if the
destination $3_1$ knows the three equations $L_{1}^{[3]}(1)$,
$L_{2}^{[3]}(1)$, $L_{3}^{[3]}(1)$, it can recover its desired
messages $\mu_1(1)$, $\nu_1(1)$, $\omega_1(1)$. After time slot
$6$, the node $3_1$ has $L_{1}^{[3]}(1)$. Thus if $L_{2}^{[3]}(1)$
and $L_{3}^{[3]}(1)$ can be provided to node $3_1$, the problem is
solved. Similarly, having $L_{5}^{[3]}(1)$, the destination
$3_{2}$ needs $L_{4}^{[3]}(1)$ and $L_{6}^{[3]}(1)$ to recover
$\mu_{2}(1), \nu_{2}(1), \omega_{2}(1)$. $L_{7}^{[3]}(1)$ and
$L_{8}^{[3]}(1)$ are desired by the destination $3_{3}$, who
already has $L_{9}^{[3]}(1)$, to recover $\mu_{3}(1)$,
$\nu_{3}(1)$, $\omega_{3}(1)$. Therefore, we aim to deliver these
six equations from the relays to the destinations in the next
three time slots.

According to the above description, we can see that after time
slot $6$, node $2_{1}$ knows the equations $L_{1}^{[2]}(1)$,
$L_{2}^{[2]}(1)$ and $L_{3}^{[2]}(1)$. Node $2_2$ knows the
equations $L_{4}^{[2]}(1)$, $L_{5}^{[2]}(1)$ and $L_{6}^{[2]}(1)$.
Node $2_3$ knows the equations $L_{7}^{[2]}(1)$,
$L_{8}^{[2]}(1)$ and $L_{9}^{[2]}(1)$. Since the channel matrices
$\mathbf{H}^{[2]}(4)$, $\mathbf{H}^{[2]}(5)$, and
$\mathbf{H}^{[2]}(6)$ are available at all nodes in layer-$2$,
the node $2_1$ can formulate the equations
$L_{2}^{[3]}(1)$ and $L_{3}^{[3]}(1)$ using
(\ref{Eqn:Eqn_NlayerRepre2_1}). Similarly, the node $2_2$ can
formulate the equations $L_{4}^{[3]}(1)$ and $L_{6}^{[3]}(1)$
according to (\ref{Eqn:Eqn_NlayerRepre2_2}). The node $2_{3}$
can formulate the equations $L_{7}^{[3]}(1)$ and $L_{8}^{[3]}(1)$
using (\ref{Eqn:Eqn_NlayerRepre2_3}).

At time slot $7$, let $2_1$ transmit $L_{2}^{[3]}(1)$ and $2_2$
transmit $L_{4}^{[3]}(1)$, as shown in Table
\ref{Table:BlockCoding}. Node $3_1$, which already knows
$L_{4}^{[3]}(1)$, can recover $L_{2}^{[3]}(1)$ by eliminating
$L_{4}^{[3]}(1)$ from its received signal. The node $3_2$ can also
attain both $L_{2}^{[3]}(1)$ and $L_{4}^{[3]}(1)$, following the
similar approach. Thus the received signals $y_1^{[3]}(7)$ and
$y_2^{[3]}(7)$ in Table \ref{Table:BlockCoding} can be replaced
with a simpler expression $\gamma_{12}^{[3]}(1) =
(L_{2}^{[3]}(1),L_{4}^{[3]}(1))$. Then we can also have
$\gamma_{13}^{[3]}(1) = (L_{3}^{[3]}(1), L_{7}^{[3]}(1))$ and
$\gamma_{23}^{[3]}(1) = (L_{6}^{[3]}(1),L_{8}^{[3]}(1))$, at the
$8$th and $9$th time slots, respectively.

Consequently, equations $L_{1}^{[3]}(1)$, $L_{2}^{[3]}(1)$,
and $L_{3}^{[3]}(1)$ are known at the destination $3_1$. The
desired messages can be recovered now. The same result holds also
for the destinations $3_2$ and $3_3$. $9$ independent
messages are delivered successfully from the source to the
destinations in one transmission round.
The same process can continue until $l$ rounds of transmissions
are finished using a total of $6l+3$ time slots (the second round
transmission is shown in Table \ref{Table:BlockCoding} partially).
When $l \rightarrow \infty$, this scheme achieves a sum
DoF $\frac{3}{2}$. The lower bound for $D^{d-CSI}(3,3)$ is proven.

To generalize this scheme to $N$ ($N\!>\!3$) layers, we first denote
the messages from the source as:
$L^{[1]}_{3(k-1)+1}(l)=\mu_{k}(l) $, $L^{[1]}_{3(k-1)+2}(l)=\nu_{k}(l)$
and $L^{[1]}_{3(k-1)+3}(l)=\omega_{k}(l)$. The
$l$th-round transmission at layer-$n$ ($n \in \{1,2,\ldots,N-1\}$)
can be denoted by the following formula. It takes the time slots
$t=6(l\!-\!1)\!+\!3(n\!-\!1)\!+\!\hat{t}$ ($\hat{t}=1,2,\ldots,6$):
\begin{equation} \label{Eqn:GeneralizedFrom}
     \mathbf{x}^{[n]}(t) =
        \begin{cases}
            \{L^{[n]}_{3(\hat{t}-1)+k}(l)\}_{k=1}^{3} & \quad \hat{t} = 1,2,3; \\
            [L^{[n+1]}_{2}(l), L^{[n+1]}_{4}(l), 0]^{T} & \quad \hat{t} = 4; \\
            [L^{[n+1]}_{3}(l), 0, L^{[n+1]}_{7}(l)]^{T} & \quad \hat{t} = 5; \\
            [0, L^{[n+1]}_{6}(l), L^{[n+1]}_{8}(l)]^{T} & \quad \hat{t} = 6. \\
        \end{cases}
\end{equation}
Here $\{L^{[n]}_{3(\hat{t}-1)+k}(l)\}_{i=k}^{3}$ represents the column
vector composed by $L^{[n]}_{3(\hat{t}-1)+k}(l)$ ($k \in \{1,2,3\}$).
We denote the received equation at node $(n\!+\!1)_{k}$ when
$\hat{t} \in \{1,2,3\}$ as $L_{3(\hat{t}-1)+k}^{[n+1]}(l)=
\sum_{i=1}^{3} h_{ki}^{[n]}(t)$ $L_{3(\hat{t}-1)+i}^{[n]}(l)$. By
induction, we assume $n_{k}$ can recover $L_{3(k-1)+i}^{[n+1]}(l)$
after the first three time slots ($i \in \{1,2,3\}$). Then the
transmission can be designed as shown in (\ref{Eqn:GeneralizedFrom}) when
$\hat{t} \in \{4,5,6\}$. Therefore, $(n\!+\!1)_{1}$ and $(n\!+\!1)_{2}$
can recover $\gamma_{12}^{[n+1]}(l) \!=\! (L_{2}^{[n+1]}(l),
L_{4}^{[n+1]}(l))$; $(n\!+\!1)_1$ and $(n\!+\!1)_3$ can recover
$\gamma_{13}^{[n+1]}(l)\!=\!(L_{3}^{[n+1]}(l),L_{7}^{[n+1]}(l))$;
and $(n\!+\!1)_2$ and $(n\!+\!1)_3$ can recover $\gamma_{23}^{[n+1]}(l) =
(L_{6}^{[n+1]}(l),L_{8}^{[n+1]}(l))$ after time slot $6l+3(n-1)$.
Since the destinations refer to the $N$th layer,
the $l$-round transmission takes $6l+3(N-2)$ time slots 
to deliver $9l$ independent messages. The achievable
sum DoF is $\frac{9l}{6l+3(N-2)} \approx \frac{3}{2}$ when
$l \rightarrow \infty$. The result still holds for $K>3$.

Now we consider $K=2$. In this case, $4$ messages are delivered
using $3$ time slots. Let $L^{[1]}_{1}=\mu_{1}$ and
$L^{[1]}_{2}=\nu_{1}$ denote the messages for the first
destination, and $L^{[1]}_{3}=\mu_{2}$ and $L^{[1]}_{4}=\nu_{2}$
denote those for the second destination. During time slot $1$, the
nodes (or antennas) $n_1$ and $n_2$ ($n \in \{1,2,\cdots,N\}$)
send $L^{[n]}_{1}$ and $L^{[n]}_{2}$, respectively. The received
signal at node $(n+1)_k$ ($k \in \{1,2\}$) is $L^{[n+1]}_{k} =
h_{k1}^{[n]}(1)L_1^{[n]}+ h_{k2}^{[n]}(1)L_{2}^{[n]}$. During time
slot $2$, $n_1$ sends $L^{[n]}_{3}$ and $n_2$ sends $L^{[n]}_{4}$.
$(n+1)_k$ receives $L^{[n+1]}_{k+2} = h_{k1}^{[n]}(2)L_3^{[n]}+
h_{k2}^{[n]}(2)L_{4}^{[n]}$. Assume $n_1$ can recover
$L^{[n+1]}_{2}$, and $n_2$ can recover $L^{[n+1]}_{3}$. During
time slot $3$, $n_1$ and $n_2$ transmit $L^{[n+1]}_{2}$ and
$L^{[n+1]}_{3}$, respectively. Since $(n+1)_1$ knows
$L^{[n+1]}_{3}$, it can recover $L^{[n+1]}_{2}$. Similarly,
$(n+1)_1$ can recover $L^{[n+1]}_{3}$. As a result, the
destination $N_1$ can thus obtain both $\mu_{1}$ and $\nu_{1}$
because it can have $L^{[N]}_{1}$ and $L^{[N]}_{2}$. The
destination $N_2$ can obtain $\mu_{2}$ and $\nu_{2}$ from
$L^{[N]}_{3}$ and $L^{[N]}_{4}$. The achieved sum DoF is
$\frac{4}{3}$ to meet the upper bound.

\section{Conclusions}
We investigate the sum DoF of a class of multi-hop MIMO
broadcast network with delayed CSIT feedback. Our results 
show the transmission design by treating the
multi-hop network as an entity can achieve better sum DoF
than the cascade approach which separates each hop individually.

\bibliographystyle{IEEEtran}
\bibliography{IEEEabrv,MyLibraryTsp}

\begin{thebibliography}{1}
\providecommand{\url}[1]{#1}
\csname url@samestyle\endcsname
\providecommand{\newblock}{\relax}
\providecommand{\bibinfo}[2]{#2}
\providecommand{\BIBentrySTDinterwordspacing}{\spaceskip=0pt\relax}
\providecommand{\BIBentryALTinterwordstretchfactor}{4}
\providecommand{\BIBentryALTinterwordspacing}{\spaceskip=\fontdimen2\font plus
\BIBentryALTinterwordstretchfactor\fontdimen3\font minus
  \fontdimen4\font\relax}
\providecommand{\BIBforeignlanguage}[2]{{%
\expandafter\ifx\csname l@#1\endcsname\relax
\typeout{** WARNING: IEEEtran.bst: No hyphenation pattern has been}%
\typeout{** loaded for the language `#1'. Using the pattern for}%
\typeout{** the default language instead.}%
\else
\language=\csname l@#1\endcsname
\fi
#2}}
\providecommand{\BIBdecl}{\relax}
\BIBdecl

\bibitem{MaddahAli2010}
M.~A. Maddah-Ali and D.~Tse, ``Completely stale transmitter channel state
  information is still very useful,'' in \emph{2010 Allerton Conference.}

\bibitem{Cadambe2009}
V.~R. Cadambe and S.~A. Jafar, ``Interference alignment and the degrees of
  freedom of wireless {X} networks,'' \emph{{IEEE} Trans. Inform. Theory},
  vol.~55, pp. 3893--3908, Sep. 2009.

\bibitem{Abdoli2011}
\BIBentryALTinterwordspacing
M.~J. Abdoli, A.~Ghasemi, and A.~K. Khandani, ``On the degrees of freedom of
  {K}-user {SISO} interference and {X} channels with delayed {CSIT}.''
  [Online]. Available: \url{http://arxiv.org/pdf/1109.4314}
\BIBentrySTDinterwordspacing

\bibitem{Jeon2009}
S.~W. Jeon, S.~Y. Chung, and S.~A. Jafar, ``Degrees of freedom of multi-source
  relay networks,'' in \emph{2009 Allerton Conference}.

\bibitem{Gou2010}
T.~Gou, S.~A. Jafar, S.~W. Jeon, and S.~Y. Chung, ``Aligned interference
  neutralization and the degrees of freedom of the 2x2x2 interference
  channel,'' \emph{{IEEE} Trans. Inform. Theory}, vol.~58, pp. 4381--4395, Jul.
  2012.

\bibitem{Chao2011}
C.~Wang, H.~Farhadi, and M.~Skoglund, ``Achieving the degrees of freedom of
  wireless multi-user relay networks,'' \emph{{IEEE} Trans. Commun.}, To
  appear.

\end{thebibliography}

\end{document}